\definecolor{mylila}{RGB}{153,50,204}
\newcommand{\op}[1]{{\operatorname{#1}}}
\newcommand{\B}[1]{{\bm{#1}}}
\newcommand{\uproman}[1]{\uppercase\expandafter{\romannumeral#1}}
\newcommand{\h}{^{\operatorname{H}}}
\newcommand{\T}{^{\operatorname{T}}}
\newcommand{\inv}{^{-1}}
\newcommand{\diag}{\operatorname{diag}}
\DeclareMathOperator{\eye}{\mathbf{I}}
\newcommand{\vect}{\operatorname{vec}}
\newcommand{\tr}{\operatorname{tr}}
\newcommand{\C}{\mathbb{C}}
\DeclareMathOperator*{\argmax}{arg\,max}
\newacronym{mimo}{MIMO}{multiple-input multiple-output}
\newacronym{simo}{SIMO}{single-input multiple-output}
\newacronym{siso}{SISO}{single-input single-output}
\newacronym{mse}{MSE}{mean square error}
\newacronym{cme}{CME}{conditional mean estimator}
\newacronym{pdf}{PDF}{probability density function}
\newacronym{adc}{ADC}{analog-to-digital converter}
\newacronym{mmse}{MMSE}{minimum mean square error}
\newacronym{snr}{SNR}{signal-to-noise ratio}
\newacronym{evd}{EVD}{eigenvalue decomposition}
\newacronym{crb}{CRB}{Cram\'er-Rao bound}
\newacronym{map}{MAP}{maximum a posteriori}
\newacronym{cs}{CS}{compressive sensing}
\newacronym{ls}{LS}{least squares}
\newacronym{awgn}{AWGN}{additive white Gaussian noise}
\newacronym{csi}{CSI}{channel state information}
\newacronym{pmf}{PMF}{probability mass function}
\newacronym{cdf}{CDF}{cumulative distribution function}
\newacronym{rv}{RV}{random variable}
\newacronym{gmm}{GMM}{Gaussian mixture model}
\newacronym{bs}{BS}{base station}
\newacronym{ofdm}{OFDM}{orthogonal frequency-division multiplexing}
\newacronym{ula}{ULA}{uniform linear array}
\newacronym{uma}{UMa}{urban macrocell}
\newacronym{em}{EM}{expectation-maximization}
\newacronym{lte}{LTE}{long term evolution}
\newacronym{vae}{VAE}{variational auto-encoder}
\newacronym{omp}{OMP}{orthogonal matching pursuit}
\newacronym{dft}{DFT}{discrete Fourier transform}
\newacronym{ml}{ML}{machine learning}
\newacronym{cnn}{CNN}{convolutional neural network}
\newacronym{mt}{MT}{mobile terminal}
\newcommand{\quadriga}{QuaDRiGa}
\newcolumntype{Y}{>{\centering\arraybackslash}X}
\colorlet{TUMOrange80}{TUMOrange!80}%
\colorlet{TUMOrange60}{TUMOrange!60}%
\colorlet{TUMOrange50}{TUMOrange!50}%
\colorlet{TUMOrange40}{TUMOrange!40}%
\colorlet{TUMOrange20}{TUMOrange!20}%
\newcommand{\marksize}{1.6pt}
\newcommand{\lineWidth}{1.2pt}
\tikzset{genie/.style={mark options={solid},color=black, line width=\lineWidth, mark size=\marksize}}
\tikzset{LS/.style={mark options={solid},color=black, line width=\lineWidth, mark size=\marksize,dashed}}
\tikzset{vae/.style={mark options={solid},color=TUMBeamerOrange, line width=\lineWidth, mark=x, mark size=\marksize,dashed}}
\tikzset{cnn1/.style={mark options={solid},color=TUMBeamerGreen, line width=\lineWidth, mark=diamond, mark size=\marksize,dashdotted}}
\tikzset{gmm/.style={mark options={solid},color=black, line width=\lineWidth, mark=square, mark size=\marksize}}
\tikzset{gmm_perSNR/.style={mark options={solid},color=TUMBlue, line width=\lineWidth, mark=triangle, mark size=\marksize,dotted}}
\tikzset{gmm_20dB/.style={mark options={solid},color=TUMBeamerRed, line width=\lineWidth, mark=o, mark size=\marksize,dashed}}
\tikzset{gmm_toep/.style={mark options={solid},color=TUMBeamerRed, line width=\lineWidth, mark=square, mark size=\marksize,dotted}}
\tikzset{samplecov/.style={mark options={solid},color=TUMOrange, line width=\lineWidth, mark=diamond, mark size=\marksize,dashdotted}}
\tikzset{
	treetop/.style = {decoration={random steps, segment length=0.4mm}, decorate},
	trunk/.style   = {decoration={random steps, segment length=2mm,
			amplitude=0.2mm}, decorate}}
\tikzset{
	my tree/.pic={
		\foreach \w/\f in {0.3/30,0.2/50,0.1/70} {
			\fill [brown!\f!black, trunk] (-\w/8,0) rectangle +(\w/4,1);
		}
		\foreach \n/\f in {0.4/40,0.3/60} {
			\fill [green!\f!black, treetop](0,1) ellipse (\n/1.5 and \n);
		}
	}
}
\tikzset{BS style/.style={
		draw,
		minimum width=0.5cm,
		minimum height=2cm,
		fill=gray,
		inner sep=0,
		outer sep=0,
}}
\tikzset{UE Style/.style={
		draw,
		fill=gray,
		minimum width=0.2cm,
		minimum height=0.3cm}
}
\newcommand{\plotwidth}{1\columnwidth}
\newcommand{\plotheight}{0.55\columnwidth}
\newacronym{psd}{PSD}{positive semi-definite}
\newtheorem{theorem}{Theorem}
\newcommand{\blue}[1]{\textcolor{black}{#1}}
\newacronym{relu}{ReLU}{rectified linear unit}
\begin{document}
	
	\title{Learning a Gaussian Mixture Model from Imperfect Training Data for Robust Channel Estimation}
	\author{Benedikt Fesl, Nurettin Turan,~\IEEEmembership{Student Member,~IEEE,} Michael Joham,~\IEEEmembership{Member,~IEEE,} \\and Wolfgang Utschick,~\IEEEmembership{Fellow,~IEEE}
		\vspace{-0.5cm}
		\thanks{This work was partly funded by Huawei Sweden Technologies AB, Lund.}
		\thanks{
			The authors are with Professur f\"ur Methoden der Signalverarbeitung, Technische Universit\"at M\"unchen, 80333 M\"unchen, Germany  (e-mail: benedikt.fesl@tum.de; nurettin.turan@tum.de; joham@tum.de; utschick@tum.de).
		}
	}
	
	
	
	\maketitle

	\begin{abstract}
		In this letter, we propose a Gaussian mixture model (GMM)\glsunset{gmm}-based channel estimator which is learned on imperfect training data, i.e., the training data \blue{are} solely comprised of noisy and sparsely allocated pilot observations. 
		In a practical application, recent pilot observations at the \ac{bs} can be utilized for training.
		This is in sharp contrast to state-of-the-art \ac{ml} techniques where a \blue{training} dataset consisting of perfect \ac{csi} \blue{samples} is a prerequisite, which is generally unaffordable.
		In particular, we propose an adapted training procedure for fitting the \ac{gmm} which is a generative model that represents the distribution of all potential channels associated with a specific \ac{bs} cell.
		To this end, the necessary modifications of the underlying \ac{em} algorithm
		are derived.
		Numerical results show that the proposed estimator performs close to the case where perfect \ac{csi} is available for the training and exhibits a higher robustness against imperfections in the training data as compared to state-of-the-art \ac{ml} techniques.
	\end{abstract}
	
	\begin{IEEEkeywords}
		Robust channel estimation, imperfect data, \\generative model, Gaussian mixture, OFDM system.
	\end{IEEEkeywords}

	\section{Introduction}
			\begin{figure}[b]
		\onecolumn
		\centering
		\copyright \scriptsize{This work has been submitted to the IEEE for possible publication. Copyright may be transferred without notice, after which this version may no longer be accessible.}
		\vspace{-1.3cm}
		\twocolumn
	\end{figure}
	\IEEEPARstart{C}{hannel} estimation plays a crucial role in enhancing wireless communications systems. Recently, \ac{ml} approaches were successfully leveraged for channel estimation.
	The goal is to exploit a priori information about all possible channels of \acp{mt} associated with a specific \ac{bs} cell and its radio propagation environment to improve the channel estimation quality.
	This is generally intractable to model analytically but is represented in terms of a training dataset that is available at the \ac{bs}.
	
	Thereby, different learning techniques can be distinguished. In end-to-end learning, the channel estimation is not performed explicitly, but a network is trained to directly perform signal detection \cite{8052521}. 
	A different approach is to learn a nonlinear regression mapping from the pilot observation in the input to a channel estimate at the output of a network by utilizing ground-truth \ac{csi} \blue{samples} \cite{8640815,NeWiUt18}. In contrast to that, it was recently proposed to train a generative model that represents the channel distribution of the whole \ac{bs} cell, which is afterwards leveraged for channel estimation \cite{9842343,vae_chEst}. 
	
	A common prerequisite of \ac{ml}-based approaches is the availability of a representative training dataset consisting of perfect \ac{csi} \blue{samples}.
	However, the construction of such a dataset is a challenging task in practice. One possibility is to perform costly measurement campaigns for each \ac{bs}, which is generally unaffordable. A different attempt is to use channel simulators, e.g., \cite{QuaDRiGa1}. 
	The inherent problem of simulators is the mismatch between the real and the \blue{artificially} generated data, \blue{leading to performance losses}. 
	Another problem is that the \ac{bs} environment may change over time which is difficult to track.
	This leads to the idea of utilizing pilot observations, which are received in great numbers at the \ac{bs} during regular operation and capture the environmental information, as training data.
	Although the data aggregation is then cheap and the dataset can be continuously updated to account for varying conditions, the dataset contains imperfections, e.g., noise and sparse pilot allocations. However, these imperfections can be mitigated by specific training adaptations since they follow a known model. 
	
	
	
	\textit{Contributions:}
	In this letter, we propose an adaptation of the \ac{em} algorithm to fit a \ac{gmm} with noisy, and possibly sparsely allocated, pilot observations. 
	We derive new update steps for the \ac{gmm} parameters that take the system model, i.e., the model of the imperfections, into account. Additionally, we show that imposing structural features to the covariances of the \ac{gmm} acts as a regularization which enhances the channel estimation performance, especially for sparse pilot allocations.
	It is discussed that \blue{the presented method} is particularly robust against imperfections in the training data, in contrast to commonly used regression-based \ac{ml} techniques. Finally, the discussed properties are verified with simulations for both a spatial and a doubly-selective fading model.

	\blue{
		\textit{Notation:}
		The identity matrix of dimension $N\times N$ is denoted by $\eye_N$.
		The column-wise vectorization and the trace of a matrix is denoted by $\vect(\cdot)$ and $\tr(\cdot)$, respectively. 
		We denote a \ac{psd} matrix $\B C$ which fulfills $\B x\h \B C \B x \geq 0$ for all $\B x\neq \B 0$ as $\B C \succeq \B 0$. 
	}
	\section{System and Channel Models}
	
	We consider the \blue{general} system model of pilot observations
	\begin{equation}
		\B y = \B A \B h + \B n ,
		\label{eq:system_model}
	\end{equation}
	where $\B A\blue{\in\mathbb{C}^{M\times N}}$ is the known observation matrix, $\B h\in\C^N$ is the wireless channel with an unknown probability density function (PDF)\glsunset{pdf} $f_{\B h}$, and \ac{awgn} $\B n\sim\mathcal{N}_\C(\B 0, \B C_{\B n}=\sigma^2\eye_{\blue{M}})$. 
	In this letter, we consider the following two instances of \eqref{eq:system_model}.
	
	\subsection{Spatial System Model}\label{subsec:spatial_channel_model}
	Consider a \ac{simo} system where a \ac{bs} equipped with $N$ antennas serves single-antenna \acp{mt} in a \blue{single-user} uplink transmission. \blue{After correlating the pilot sequence for a single snapshot we obtain $\B y =\B h + \B n$, following the model in \eqref{eq:system_model} with $M=N$ and $\B A = \eye_N$.}
	
	We work with a spatial channel model \cite{NeWiUt18} where channels are modeled conditionally Gaussian: $ \B h | \B \delta \sim \mathcal{N}_\C(\B 0, \B C_{\B \delta}) $.
	The random vector $ \B \delta $ collects the angles of arrival and path gains of the main propagation clusters between a \ac{mt} and the \ac{bs}. 
	A different spatial channel covariance matrix $ \B C_{\B \delta}$ is computed for each sample by means of the steering vector of a \ac{ula} antenna array at the \ac{bs}, cf. \cite{NeWiUt18}.
	

	\subsection{Doubly-Selective Fading System Model}\label{subsec:doubly_channel_model}
	
	
	In this case, we consider a \ac{siso} transmission in the spatial domain over a doubly-selective fading channel \blue{$\B h = \vect(\B H)$}, where $\B H\in \C^{N_c\times N_t} $ represents the time-frequency response of the channel for $ N_c $ carriers and $ N_t $ time slots. This is a typical setup in \ac{ofdm} systems.
	When only $ N_p $ positions of the $N_t \times N_c$ time-frequency response are occupied by pilot symbols, there is a \textit{selection matrix} $ \B A \in \{0,1\}^{N_p\times N_c N_t} $ which represents the pilot positions.
	\blue{This leads to observations as described in \eqref{eq:system_model} where $N=N_cN_t$ and $M=N_p$}.
	In this work, we consider a diamond-shaped pilot allocation scheme which is known to be \ac{mse}-optimal \cite{Choi2005}.
	
	For the construction of a scenario-specific channel dataset, we use the \quadriga~channel simulator \cite{QuaDRiGa1}.
	We consider an \ac{uma} scenario following the 3GPP 38.901 specification, where the \ac{bs} is placed at a height of 25m and covers a sector of 120°.
	Each \ac{mt} is either placed indoors (80\%) or outdoors (20\%) and moves with a certain velocity $v$ in a random direction, which is captured by a drifting model.

	\section{Learning a GMM from Imperfect Data}
	
	\subsection{Scenario-Specific Imperfect Training Dataset}\label{subsec:dataset}
	Having access to training data that represent the environment of a \ac{bs} cell, i.e., the typically complex and intractable channel \ac{pdf} $f_{\B h}$, in combination with \ac{ml} techniques, has been shown to substantially improve the channel estimation performance.
	The common prerequisite of these data-aided techniques is the availability of a training dataset $\mathcal{H}$ consisting of perfect \ac{csi} \blue{samples}, i.e., $L$ channel realizations 
	$\mathcal{H} = \{\B h_\ell\}_{\ell=1}^L$. However, 
	this assumption is impractical if the training data are collected at a \ac{bs} during regular operation. That is, the training data stem from certain pilot positions at a finite \ac{snr}, i.e., $\mathcal{Y} = \{\B y_\ell\}_{\ell=1}^L$, cf. \eqref{eq:system_model}. Note that these data can be pre-processed in different ways, e.g., by denoising or interpolating the pilot positions. Although these training data are corrupted by various imperfections, the statistical information about the system model can be utilized to mitigate these effects, as shown in the following.

	\subsection{\blue{GMM-based} Channel Estimator with Adapted Training}\label{subsec:estimator}
	In \cite{9842343}, a channel estimator was introduced that is based on a \ac{gmm} whose application consists of two phases. First, in an offline training phase a $K$-components \ac{gmm} of the form 
	\begin{equation}\label{eq:gmm_h}
		f_{\B h}^{(K)}(\B h) = \sum_{k=1}^K \pi_k \mathcal{N}_\C(\B h; \B \mu_k, \B C_k),
	\end{equation}
	where $\pi_k$, $\B \mu_k$, and $\B C_k$ are the mixing coefficients, means, and covariances of the $k$th \ac{gmm} component, respectively, is fitted by maximum likelihood optimization with the well-known \ac{em} algorithm, cf. \cite[Sec. 9.2]{bookBi06}, via  training samples from $\mathcal{H}$ in order to approximate the underlying distribution $f_{\B h}$ of the channels in the whole \ac{bs} cell. \blue{Due to the Gaussianity of the noise with the known covariance matrix $\B C_{\B n}$ and the known observation matrix $\B A$, the \ac{pdf} of the observations $f_{\B y}$ can be approximated by means of the \ac{gmm} $f_{\B y}^{(K)}(\B y) = \sum_{k=1}^K \pi_k \mathcal{N}_\C(\B y; \B A \B \mu_k, \B C_{\B y,k})$, where $\B C_{\B y,k} = \B A\B C_k\B A\h + \B C_{\B n}$, which can be straightforwardly computed once \eqref{eq:gmm_h} is known, cf. \eqref{eq:system_model}.}
	Second, the trained \ac{gmm} is leveraged to perform channel estimation by computing a convex combination of linear \ac{mmse} estimates for a given pilot signal $\B y$, i.e.,
	\begin{equation}
		\hat{\B h}_{\text{GMM}} = \sum_{k=1}^K \gamma_k(\B y) \left( \B C_k\B A\h\B C_{\B y,k}\inv (\B y - \B A \B \mu_k) + \B \mu_k\right)
	\end{equation}
	where $\gamma_k(\B y)$ is the responsibility of the $k$th component\blue{, i.e., the probability that the component $k$ is responsible for the observation $\B y$ \cite[Sec. 9.2]{bookBi06}.}
	In \cite{gmm_structured}, the estimator was extended to impose structural features to the channel covariances $\B C_k$ in the training of the \ac{gmm} for saving complexity and memory.
	
	
	However, so far, the \ac{gmm} was trained on a dataset comprised of perfect channels $\mathcal{H}$, cf. \Cref{subsec:dataset}. If we naively replace $\mathcal{H}$ with the noisy, possibly pre-processed, training data $\mathcal{Y}$ from pilot observations \eqref{eq:system_model}, a severe performance loss can be expected due to the imperfections in the training data.
	Therefore, in the following, we propose adapted training procedures for the \ac{gmm} that drastically alleviates the effects of the imperfect training data by utilizing the knowledge of the system model \eqref{eq:system_model} and possibly existing structural features of the covariances.

	
	We first discuss the case of training data that is corrupted by \ac{awgn} following the model in \eqref{eq:system_model} with $\B A = \eye_{\blue{N}}$, cf. \Cref{subsec:spatial_channel_model}. The main idea is to adapt the \ac{em} algorithm such that the \ac{pdf} $f_{\B y}$ of the observations is approximated by the \ac{gmm} $f_{\B y}^{(K)}(\B y)$.
	Due to zero-mean \ac{awgn}, the updates of the mixing coefficients $\pi_k$ and the means $\B \mu_k$ are unchanged with respect to the classical \ac{em} algorithm, cf. \cite[Sec. 9.2]{bookBi06}.
	However, 
	we would like to include the constraint that $\B C_{\B y,k} = \B C_k + \B C_{\B n}$, following the model \eqref{eq:system_model}.
	The \ac{gmm} for the channel distribution \eqref{eq:gmm_h} can then be directly obtained since $\B C_{\B n}$ is known.
	The derivation for the update of $\B C_k$ is given in the following.
	
	\begin{theorem}\label{theorem:awgn}
		Given noisy pilot observations $\mathcal{Y}$, the maximum likelihood solution $\B C_k^*$ for the covariance in the \ac{em} algorithm is given by computing the \ac{evd} 
		\begin{equation}\label{eq:evd}
			\hat{\B C}_{\B y,k} - \B C_{\B n} = \B V \diag(\B \xi)\B V\h
		\end{equation}
		with $\hat{\B C}_{\B y,k} =\frac{1}{N_k} \sum_{\ell=1}^L \gamma_{k,\ell} (\B y_\ell - \B\mu_k)(\B y_\ell - \B\mu_k)\h$, $N_k =  \sum_{\ell=1}^L \gamma_{k,\ell}$, and $\gamma_{k,\ell}$ being the responsibility of component $k$ for data point $\B y_\ell$. Afterwards, an elementwise truncation of the negative eigenvalues via $\B \xi^{\text{PSD}} = \max(\B 0, \B \xi)$ is performed such that
		\begin{equation}\label{eq:psd_solution}
			\B C_{k}^* =  \B V \diag(\B \xi^{\text{PSD}})\B V\h.
		\end{equation}
	\end{theorem}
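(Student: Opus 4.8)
The plan is to identify the function being maximized in the M-step and then solve the resulting constrained eigenvalue problem. In the standard EM M-step the covariance of component $k$ is obtained by maximizing the expected complete-data log-likelihood; as a function of the observation covariance $\B S := \B C_{\B y,k}$, and up to additive constants and the positive factor $N_k$, this contribution is
\begin{equation}
	g(\B S) = -\log\det\B S - \tr(\B S\inv\hat{\B C}_{\B y,k}),
\end{equation}
where $\hat{\B C}_{\B y,k}$ is the weighted sample covariance in the statement. The only new ingredient is the structural constraint $\B S = \B C_k + \B C_{\B n}$ with $\B C_k \succeq \B 0$. First I would rewrite this constraint: since $\B C_{\B n} = \sigma^2\eye_N$, requiring $\B C_k \succeq \B 0$ is equivalent to $\B S \succeq \sigma^2\eye_N$, i.e., every eigenvalue of $\B S$ is at least $\sigma^2$. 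Hence the M-step reduces to maximizing $g(\B S)$ over the set $\{\B S : \B S \succeq \sigma^2\eye_N\}$.

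Next I would decouple the problem into eigenvectors and eigenvalues. The objective depends on the eigenvectors of $\B S$ only through $\tr(\B S\inv\hat{\B C}_{\B y,k})$, whereas $\log\det\B S$ and the constraint depend on the eigenvalues of $\B S$ alone. Writing $\B S = \B W\diag(\B s)\B W\h$, the von Neumann trace inequality implies that, for any fixed multiset of eigenvalues of $\B S$, the term $\tr(\B S\inv\hat{\B C}_{\B y,k})$ is minimized precisely when $\B S$ and $\hat{\B C}_{\B y,k}$ are simultaneously diagonalizable with their eigenvalues sorted in the same order. Crucially, because $\B C_{\B n}$ is a scaled identity the feasible set is invariant under this realignment, so there is no loss in taking $\B W = \B V$, i.e., the eigenvectors of the optimal $\B S^*$ coincide with those of $\hat{\B C}_{\B y,k}$, equivalently with the eigenvectors $\B V$ of $\hat{\B C}_{\B y,k} - \B C_{\B n}$ in \eqref{eq:evd}. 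This is the step I expect to be the main obstacle, both because the rearrangement argument must be handled with care in the presence of repeated eigenvalues and because it is exactly here that the assumption $\B C_{\B n} = \sigma^2\eye_N$ is indispensable; for a non-diagonal $\B C_{\B n}$ the feasible region would not be rotation-invariant and the eigenbasis of $\B S^*$ would no longer be that of $\hat{\B C}_{\B y,k}$.

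With the eigenbasis fixed to $\B V$, the objective separates across coordinates. Writing the eigenvalues of $\hat{\B C}_{\B y,k}$ as $\hat\lambda_i = \xi_i + \sigma^2$, I would maximize each scalar term $-\log s_i - \hat\lambda_i/s_i$ over $s_i \geq \sigma^2$. Its derivative $(\hat\lambda_i - s_i)/s_i^2$ shows that the unconstrained maximizer is $s_i = \hat\lambda_i$, so imposing $s_i \geq \sigma^2$ yields $s_i^* = \max(\hat\lambda_i, \sigma^2) = \max(\xi_i, 0) + \sigma^2$. Translating back via $\B C_k = \B S - \sigma^2\eye_N$ gives the eigenvalues $\max(\xi_i, 0) = \xi_i^{\text{PSD}}$, so that $\B C_k^* = \B V\diag(\B\xi^{\text{PSD}})\B V\h$, which is \eqref{eq:psd_solution}. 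It then remains only to note that the responsibilities $\gamma_{k,\ell}$ and the updates of $\pi_k$ and $\B\mu_k$ are unchanged from the standard EM algorithm (already argued in the text from $\B A = \eye_N$ and zero-mean noise), so that the derived $\B C_k^*$ completes a valid M-step.
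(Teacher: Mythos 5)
Your proof is correct and follows the same overall route as the paper: both reduce the M-step to maximizing the single-component Gaussian likelihood $-\log\det(\B C_k+\B C_{\B n})-\tr\bigl(\hat{\B C}_{\B y,k}(\B C_k+\B C_{\B n})\inv\bigr)$ over $\B C_k\succeq\B 0$. The difference is what happens next: the paper simply cites the appendix of an external reference for the fact that the EVD of $\hat{\B C}_{\B y,k}-\B C_{\B n}$ followed by truncation of the negative eigenvalues is the maximizer, whereas you prove this step yourself --- rewriting the constraint as $\B S\succeq\sigma^2\eye_N$, fixing the eigenbasis of the optimal $\B S$ to that of $\hat{\B C}_{\B y,k}$ via von Neumann's trace inequality (which is legitimate because the feasible set is unitarily invariant when $\B C_{\B n}$ is a scaled identity), and solving the decoupled scalar problems $\max_{s_i\geq\sigma^2}\{-\log s_i-\hat\lambda_i/s_i\}$ to obtain $s_i^*=\max(\hat\lambda_i,\sigma^2)$, hence $\B\xi^{\text{PSD}}=\max(\B 0,\B\xi)$. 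This self-contained argument is sound and is essentially the content of the cited appendix; your remark that the eigenbasis-alignment step genuinely relies on $\B C_{\B n}=\sigma^2\eye_N$ is a worthwhile observation that the theorem, stated with a generic $\B C_{\B n}$, leaves implicit.
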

	
	\begin{proof}
		The maximization of the expected complete log-likelihood, cf. \cite[Sec. 9.3]{bookBi06}, of the pilot observations with respect to the $k$th channel covariance $\B C_k$ is given by
		\blue{
			\begin{align}
				\B C_k^* &=  \argmax_{\B C_k\succeq\B 0} \sum_{j=1}^K\sum_{\ell=1}^L \gamma_{j,\ell} \log \mathcal{N}_{\mathbb{C}}(\B y_\ell; \B \mu_j , \B C_j+\B C_{\B n})
				\label{eq:proof11}
				\\
				&=  \argmax_{\B C_k\succeq\B 0} \sum_{\ell=1}^L \gamma_{k,\ell} \log \mathcal{N}_{\mathbb{C}}(\B y_\ell; \B \mu_k , \B C_k+\B C_{\B n})
				\label{eq:proof12}
				\\
				&=   \argmax_{\B C_k\succeq\B 0} 
				-\log\det (\B C_k + \B C_{\B n})
				- \tr(\hat{\B C}_{\B y,k} (\B C_k + \B C_{\B n})\inv)
				\label{eq:proof13}
			\end{align}
			where from \eqref{eq:proof11} to \eqref{eq:proof12} it is used that the maximization depends only on the $k$th \ac{gmm} component, and \eqref{eq:proof13} follows from the definition of the Gaussian \ac{pdf} and $\hat{\B C}_{\B y,k}$ from above.
			Thus, the optimization simplifies to a maximization of the Gaussian likelihood of the $k$th \ac{gmm} component.
		}
		The optimal solution for the case of a Gaussian likelihood is derived in \cite[Appendix]{7051818} and
		is obtained with the \ac{evd} and the truncation of the negative eigenvalues.
		This can be applied \blue{to \eqref{eq:proof13}}
		through a substitution of 
		$\hat{\B C}_{\B y,k}$ 
		which yields the solution in \eqref{eq:psd_solution}. 
	\end{proof}

	

	Next, we consider the case where the pilot observations stem from the \ac{ofdm} model in \Cref{subsec:doubly_channel_model}, i.e., only a few channel entries are observed according to the pilot pattern represented by $\B A$.
	To adapt the training procedure, we combine the insights from Theorem \ref{theorem:awgn} together with the \ac{em} algorithm for missing entries, cf. \cite[Chapter 11]{bookRubin02}. 
	We first define a selection matrix $\bar{\B A}\blue{\in\{0,1\}^{(N_cN_t-N_p)\times N_cN_t}}$ which represents the positions that are not allocated with pilots, i.e., $\B A \bar{\B A}\T = \B 0$. 
	After an initial interpolation step, one \ac{em} iteration is performed in the usual way to have initial estimates of the \ac{gmm} parameters. 
	\blue{
		The updates of the mixing coefficients $\pi_k$ and means $\B \mu_k$ are again unchanged with respect to the classical \ac{em} algorithm due to the zero-mean \ac{awgn}.}
	The adapted updates of the covariances $\B C_k$ are derived in the following.
	
	\begin{theorem}\label{theorem:missing}
		Given sparsely allocated and noisy pilot observations $\mathcal{Y}$, first, a linear \ac{mmse} estimate of the unobserved and noisy channel entries $\bar{\B y }_{\ell,k}$ is computed via the current statistics of the $k$th \ac{gmm} component as
		\begin{equation}\label{eq:interpolation}
			\bar{\B y }_{k,\ell} = \bar{\B A} \B \mu_k + \bar{\B A} \B C_k\B A\T (\B A\B C_k\B A\T + \B C_{\B n})\inv (\B y_\ell - \B A \B \mu_k).
		\end{equation}
		Afterwards, a fully interpolated sample, given as $\hat{\B y }_{k,\ell} = \B A\T \B y_\ell + \bar{\B A}\T \bar{\B y }_{k,\ell}$, is used to update
		\begin{equation}\label{eq:update_Cy}
			\hat{\B C}_{\B y,k} =\frac{1}{N_k} \sum_{\ell=1}^L \gamma_{k,\ell} (\hat{\B y }_{k,\ell} - \B\mu_k)(\hat{\B y }_{k,\ell} - \B\mu_k)\h +\bar{\B A}\T \B \Sigma_k \bar{\B A}
		\end{equation}
		where $N_k = \sum_{\ell=1}^L \gamma_{k,\ell} $ and 
		\begin{equation}\label{eq:cov_unobserved}
			\B \Sigma_k = \bar{\B A} \B C_k \bar{\B A}\T - \bar{\B A} \B C_k\B A\T(\B A\B C_k\B A\T + \B C_{\B n})\inv \B A\B C_k\bar{\B A}\T.
		\end{equation}
		Finally, to account for the \ac{awgn}, the update of the channel covariance matrix \blue{$\B C_k^*$} is computed via the \ac{evd} of $\hat{\B C}_{\B y,k} - \B C_{\B n}$ and the truncation of the negative eigenvalues, cf. Theorem \ref{theorem:awgn}.
	\end{theorem}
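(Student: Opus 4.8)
The plan is to read \eqref{eq:update_Cy}--\eqref{eq:cov_unobserved} as a single \ac{em} covariance update for a \ac{gmm} fitted to a \emph{full-length} noisy vector whose pilot entries are observed and whose remaining entries are missing, thereby combining the denoising argument of Theorem~\ref{theorem:awgn} with the missing-data \ac{em} of \cite[Chapter 11]{bookRubin02}. Concretely, I would introduce the conceptual complete datum $\B z_\ell\in\mathbb{C}^{N_cN_t}$ with per-component law $\mathcal{N}_{\mathbb{C}}(\B\mu_k,\B C_k+\B C_{\B n})$, where $\B C_{\B n}=\sigma^2\eye_{N_cN_t}$, such that its pilot block $\B A\B z_\ell$ coincides with the observation $\B y_\ell$ while the complementary block $\bar{\B A}\B z_\ell$ is unobserved. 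Because the noise is zero-mean, the updates of $\pi_k$ and $\B\mu_k$ are inherited from the classical \ac{em} algorithm, so only the covariance objective---the expected complete log-likelihood for $\B C_k$ with the expectation taken over the missing block under the current parameters---requires work.

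First I would decouple the mixture exactly as in the proof of Theorem~\ref{theorem:awgn}, reducing the maximization to the single-component problem $\argmax_{\B C_k\succeq\B 0}\,-\log\det(\B C_k+\B C_{\B n})-\tr((\B C_k+\B C_{\B n})\inv\hat{\B C}_{\B y,k})$, in which $\hat{\B C}_{\B y,k}$ is now the E-step conditional second moment rather than a raw sample covariance. The central identity is the decomposition $\E[(\B z_\ell-\B\mu_k)(\B z_\ell-\B\mu_k)\h\mid\B y_\ell]=(\hat{\B y}_{k,\ell}-\B\mu_k)(\hat{\B y}_{k,\ell}-\B\mu_k)\h+\operatorname{Cov}(\B z_\ell\mid\B y_\ell)$. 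Standard Gaussian conditioning on the pilot block then identifies the conditional mean: its pilot part equals $\B y_\ell$ (the observed block is known exactly) and its complementary part is the linear \ac{mmse} interpolation $\bar{\B y}_{k,\ell}$ of \eqref{eq:interpolation}, so that the conditional mean is precisely the completed sample $\hat{\B y}_{k,\ell}=\B A\T\B y_\ell+\bar{\B A}\T\bar{\B y}_{k,\ell}$.

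The step demanding the most care is the conditional covariance, which produces the correction term in \eqref{eq:update_Cy}. Since the pilot block is known exactly, $\operatorname{Cov}(\B z_\ell\mid\B y_\ell)$ is supported on the complementary block with value given by the Schur complement $\B\Sigma_k$ of \eqref{eq:cov_unobserved}, i.e.\ it equals $\bar{\B A}\T\B\Sigma_k\bar{\B A}$ in the full space; taking the responsibility-weighted sum over $\ell$ and normalizing by $N_k$ then yields \eqref{eq:update_Cy}. I expect the main obstacle to be the bookkeeping of the block partition induced by $\B A$ and $\bar{\B A}$---in particular the orthogonality relations $\B A\bar{\B A}\T=\B 0$, $\B A\B A\T=\eye_{N_p}$, $\bar{\B A}\bar{\B A}\T=\eye_{N_cN_t-N_p}$, and $\B A\T\B A+\bar{\B A}\T\bar{\B A}=\eye_{N_cN_t}$---together with keeping track of which blocks carry the known noise, so that $\hat{\B C}_{\B y,k}$ consistently estimates $\B C_k+\B C_{\B n}$ rather than $\B C_k$ itself.

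Finally, once $\hat{\B C}_{\B y,k}$ is assembled, the objective is precisely the Gaussian-likelihood problem already solved in Theorem~\ref{theorem:awgn}, so I would close the argument by invoking that result directly: compute the \ac{evd} of $\hat{\B C}_{\B y,k}-\B C_{\B n}$ as in \eqref{eq:evd} and truncate the negative eigenvalues as in \eqref{eq:psd_solution} to obtain the \ac{psd} maximizer $\B C_k^*$. I would also note that the interpolation \eqref{eq:interpolation} is evaluated at the current iterate of $\B C_k$, which is exactly what makes this an \ac{em} iteration, and that the monotone ascent of the likelihood needs no separate proof here, since it follows from the general \ac{em} convergence guarantee.
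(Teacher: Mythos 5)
Your proposal takes essentially the same route as the paper: the paper's own proof simply cites the missing-data \ac{em} algorithm of \cite[Chapter 11]{bookRubin02} for the steps \eqref{eq:interpolation}--\eqref{eq:cov_unobserved} and then invokes Theorem~\ref{theorem:awgn} for the \ac{evd} projection, which are exactly the two ingredients you combine; you merely spell out the Gaussian-conditioning computation (conditional mean equals the completed sample, conditional second moment splits into outer product plus conditional covariance) that the citation hides.

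One detail in your write-up does not quite close. Under the complete-data model you posit, $\B z_\ell\sim\mathcal{N}_{\C}(\B\mu_k,\B C_k+\sigma^2\eye_{N_cN_t})$ with observed block $\B A\B z_\ell=\B y_\ell$, the conditional covariance of the missing block is
$\bar{\B A}(\B C_k+\sigma^2\eye)\bar{\B A}\T-\bar{\B A}(\B C_k+\sigma^2\eye)\B A\T(\B A\B C_k\B A\T+\B C_{\B n})\inv\B A(\B C_k+\sigma^2\eye)\bar{\B A}\T
=\B\Sigma_k+\sigma^2\eye_{N_cN_t-N_p}$,
not the $\B\Sigma_k$ of \eqref{eq:cov_unobserved}: the orthogonality $\B A\bar{\B A}\T=\B 0$ kills the $\sigma^2$ contribution in the cross-covariance but not in the diagonal block $\bar{\B A}\bar{\B A}\T=\eye$. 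The expression \eqref{eq:cov_unobserved} as written is the conditional covariance of the \emph{noiseless} missing channel entries $\bar{\B A}\B h_\ell$ given $\B y_\ell$. Either convention can be made to work, but it must be paired consistently with how much noise is subtracted from which diagonal entries before the \ac{evd} (full $\sigma^2\eye_{N_cN_t}$ versus $\sigma^2\B A\T\B A$); the two choices differ by $\sigma^2$ on the unobserved diagonal. This is a bookkeeping discrepancy rather than a conceptual one, and it mirrors an ambiguity already present in the theorem statement (which calls $\bar{\B y}_{k,\ell}$ ``noisy channel entries'' yet uses the noiseless Schur complement), so it does not change the verdict that your argument is the paper's argument, made explicit.
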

	
	\begin{proof}
		The steps \eqref{eq:interpolation}--\eqref{eq:cov_unobserved} are derived in the \ac{em} algorithm for missing data, cf. \cite[Chapter 11]{bookRubin02}, where the additional covariance term \eqref{eq:cov_unobserved} accounts for the estimated covariance of the missing entries. 
		\blue{Given the estimate of the observation covariance $\hat{\B C}_{\B y,k}$ in \eqref{eq:update_Cy}, the subsequent projection via the \ac{evd} to get the maximum likelihood estimation of the channel covariance $\B C_k^*$ is a direct consequence of Theorem \ref{theorem:awgn}.}
	\end{proof}

	Additionally, it is possible to enforce covariances with a specific structure \cite{gmm_structured}. In this letter, we are focusing especially on the case of block-Toeplitz covariances, constructed as 
	$\B C_k = \B Q\h \diag(\B c_k)\B Q$,
	where $\B Q$ is a truncated 2D-\ac{dft} matrix. 
	This structure fits particularly well in the \ac{ofdm} case \cite{gmm_structured}. It is important to mention that the imposed structural constraints on the covariances can be understood as a regularization technique. Interestingly, as shown later in \Cref{sec:sim_results}, this regularization allows for performance gains, especially in the case of sparse pilot allocations.
	
	Algorithm \ref{alg:adapted_em_structured} summarizes the proposed adapted \ac{em} algorithm for fitting the \ac{gmm} in the \ac{ofdm} case from \Cref{subsec:doubly_channel_model} with noisy and sparsely allocated pilot observations with structured covariances. If we consider the spatial model from \Cref{subsec:spatial_channel_model}, the steps 8, 9, and 14 can be omitted. For unstructured covariances, the steps 2, 20, 21, and 22 are dropped.

	
	\subsection{Discussion about Robustness}\label{subsec:robustness}
	
	
	The first stage of training the \ac{gmm} is equivalent to learn a generative model that represents the channel distribution of the whole \ac{bs} cell.
	On the one hand, this generative model allows to leverage prior information about the distribution of the channels in the whole \ac{bs} cell that enhances the channel estimation performance \cite{9842343}. On the other hand, it can be adapted to imperfections in the training data as discussed above which is motivated by model-based insights. This is a fundamental difference to common learning-based estimators that are trained to learn a regression mapping between pilot observations and channel estimates, which inherently rely on perfect \ac{csi} \blue{samples}, e.g., \cite{NeWiUt18,8640815,8052521}. Thus, the proposed \ac{gmm} approach allows for a more robust solution with respect to various imperfections in the training data because of its ability to adapt the training procedure accordingly and introduce structural regularization, cf. \Cref{subsec:estimator}. 
	
	\begin{algorithm}[t]
		\caption{Adapted \ac{em} with Structured Covariances.}
		\label{alg:adapted_em_structured}
		\begin{algorithmic}[1]
			\REQUIRE $\mathcal{Y}$, $K$, $\B A$, $\B C_{\B n}$, $\{\B\mu_k^{(1)}, \B C_k^{(1)}, \pi_k^{(1)}\}_{k=1}^K$, $\B Q$, $i=1$, $i_{\text{max}}$
			\vspace{-0.35cm}
			\STATE Get selection matrix $\bar{\B A}$ such that $\B A\bar{\B A}\T = \B 0$
			\STATE Initialize $\{\B c_k^{(1)} \leftarrow \B Q \B C_k^{(1)} \B Q\h\}_{k=1}^K$
			\renewcommand{\algorithmicendfor}{\textbf{end}}
			\renewcommand{\algorithmicendwhile}{\textbf{end}}
			\WHILE{$i  <i_{\text{max}}$ and convergence criterion not met}
			\STATE $\{\B C_{\B y,k}^{(i)} \leftarrow  \B A \B C_k^{(i)}\B A\T + \B C_{\B n}\}_{k=1}^K$
			\FOR{$k=1$ to $K$}
			\FOR {$\ell =1$ to $L$}
			\STATE $\gamma_{k,\ell} \leftarrow \frac{\pi_k^{(i)} \mathcal{N}_\C(\B y_{\ell}; \B A\B \mu_k^{(i)},\B C_{\B y,k}^{(i)})}{\sum_{j=1}^K \pi_j^{(i)} \mathcal{N}_\C(\B y_{\ell}; \B A\B \mu_j^{(i)}, \B C_{\B y,j}^{(i)})}$
			\COMMENT{E-step}
			\STATE 	$\bar{\B y }_{k,\ell} \leftarrow \bar{\B A} \B \mu_k^{(i)} + \bar{\B A} \B C_k^{(i)}\B A\T \B C_{\B y,k}^{(i),-1} (\B y_\ell - \B A \B \mu_k^{(i)})$
			\STATE $\hat{\B y }_{k,\ell} \leftarrow \B A\T \B y_\ell + \bar{\B A}\T \bar{\B y }_{k,\ell}$
			\COMMENT{interpolated sample}
			\ENDFOR
			\STATE $N_k \leftarrow \sum_{\ell =1}^L \gamma_{k,\ell}$
			\STATE $\pi_k^{(i+1)} \leftarrow \frac{N_k}{L}$
			\COMMENT{mixing coefficient update}
			\STATE $\B\mu_k^{(i+1)} \leftarrow \frac{1}{N_k} \sum_{\ell=1}^L\gamma_{k,\ell} \hat{\B y }_{k,\ell} $
			\COMMENT{mean update}
			\STATE $\B \Sigma_k \leftarrow \bar{\B A} \B C_k^{(i)} \bar{\B A}\T - \bar{\B A} \B C_k^{(i)}\B A\T\B C_{\B y,k}^{(i),-1} \B A\B C_k^{(i)}\bar{\B A}\T$
			\STATE  $\hat{\B y }_{k,\ell} \leftarrow \hat{\B y }_{k,\ell}  - \B\mu_k^{(i+1)}$
			\STATE $\B C_{\B y,k}^{(i+1)} \leftarrow \frac{1}{N_k} \sum\limits_{\ell=1}^L \gamma_{k,\ell}\hat{\B y }_{k,\ell}\hat{\B y }_{k,\ell}\h +\bar{\B A}\T \B \Sigma_k \bar{\B A}$
			\STATE $\B V_k,~\B \xi_k \leftarrow \operatorname{EVD}(\B C_{\B y,k}^{(i+1)} - \B C_{\B n})$
			\STATE $\B \xi_k^{\text{PSD}} \leftarrow \max(\B 0,  \B \xi_k)$ 
			\COMMENT{elementwise max} 
			\STATE $\B C_k^{(i+1)} \leftarrow \B V_k \diag(\B \xi_k^{\text{PSD}})\B V_k\h$
			\STATE $\B \Theta_k \leftarrow \B Q \left(\B C_k^{(i),-1}\B C_k^{(i+1)}  \B C_k^{(i),-1} - \B C_k^{(i),-1}\right)\B Q\h$
			\STATE $\B c_k^{(i+1)} \leftarrow \B c_k^{(i)} + \diag\left(\diag(\B c_k^{(i)}) \B \Theta_k \diag(\B c_k^{(i)}) \right)$
			\STATE $\B C_k^{(i+1)} \leftarrow \B Q\h \diag(\B c_k^{(i+1)}) \B Q$
			\COMMENT{covariance update}
			\ENDFOR
			\STATE $i\leftarrow i +1 $
			\ENDWHILE		
			\RETURN $\{\B \mu_k^{(i)}, \B C_k^{(i)}, \B \pi_k^{(i)} \}_{k=1}^K$
		\end{algorithmic}
	\end{algorithm}
	
	
	An important advantage of the \ac{gmm} estimator is that only the training phase must be changed in contrast to \cite{9842343} where perfect training \ac{csi} is assumed to be available. Thus, the online complexity and the number of parameters of the estimator do not change.
	Furthermore, the \ac{gmm} is universal and requires further adaptation only in case of changing parameters of the propagation environment which can be conveniently tracked with the frequently received pilots at the \ac{bs}.
	Additionally, it has to be trained only once for a given \ac{snr} and can then be applied to any other \ac{snr} value, which is in contrast to learning-based estimators where the dependency on the \ac{snr} is crucial, e.g., \cite{8640815,NeWiUt18}.
	Fortunately, since the proposed covariance updates in Theorem \ref{theorem:awgn} and Theorem \ref{theorem:missing} are the maximum likelihood solutions, all favorable properties of the \ac{em} algorithm, such as a monotonically increasing likelihood, are preserved. 
	
	\section{Simulation Results}\label{sec:sim_results}
	We present numerical results to evaluate the proposed method for two different channel models in comparison to state-of-the-art estimators. We set $\op E[\|\B h\|^2]=N$ such that the $\text{SNR} = 1/\sigma^2$. We choose $K=64$ \ac{gmm} components \blue{which is practically reasonable \cite{9842343}}. We utilize $L=10^5$ training samples \blue{for all data-based approaches}, and evaluate on $10^4$ test samples \blue{of different \acp{mt} that are not part of the training data}.
	
	
	\subsection{Spatial Channel Model}
	
	We first evaluate the channel estimation performance for the spatial model from \Cref{subsec:spatial_channel_model} with $\B A = \eye_{\blue{N}}$. 
	The \ac{gmm} estimator which is based on perfect \ac{csi} is denoted by ``GMM $\mathcal{H}$'', whereas the \ac{gmm} estimator that is naively used without any modifications on training data from $\mathcal{Y}$ is denoted by ``GMM mismatch''. The proposed approach with adapted training from \Cref{subsec:estimator} is denoted by ``GMM $\mathcal{Y}$''.
	We analyze the performance in comparison with the following baseline estimators.
	The curve labeled ``genie'' represents the utopian \ac{mmse} estimator that has full knowledge of $\B C_{\B\delta}$ for each sample. We further evaluate the \ac{ls} solution $\hat{\B h}_{\text{LS}} = \B y$. 
	We include the \ac{cnn} estimator from \cite{NeWiUt18}, that learns a regression mapping, labeled ``CNN''. \blue{The \ac{cnn} consists of two layers with \ac{relu} activation and we use the truncated \ac{dft} matrix for the input/output transform, cf. \cite{NeWiUt18}}. To achieve a fair comparison, the \ac{cnn} estimator is trained on data from $\mathcal{Y}$, which clearly introduces an unavoidable mismatch in the learning phase.
	
	In Fig. \ref{fig:mse_simo_1path}, the channel estimation performance of the above discussed estimators is shown for $N=128$ \ac{bs} antennas and one propagation cluster, consisting of multiple sub-paths. 
	The depicted \ac{snr} is the same in both training and evaluation, i.e., a different set of training data $\mathcal{Y}$ is given for each \ac{snr} value, which mimics a realistic situation.
	The \ac{gmm} estimator with perfect \ac{csi} $\mathcal{H}$ performs very close to the genie-\ac{mmse} estimator which is in accordance with the findings in \cite{9842343}. If the \ac{gmm} estimator is fitted naively with data from $\mathcal{Y}$ without modifications to the training procedure, this leads to a severe performance loss of about $5$dB for the whole \ac{snr} range. However, with the proposed modifications, a performance close to the perfect \ac{csi} case, and thus to the genie-\ac{mmse} estimator, is possible, purely based on training data from $\mathcal{Y}$.  
	The \ac{cnn} estimator has a substantial performance loss as compared to the proposed estimator, especially in the high \ac{snr} regime. 
	\begin{figure}[t]
	\centering
	\begin{tikzpicture}
		\begin{axis}
			[width=\plotwidth,
			height=\plotheight,
			xtick=data,
			ytick={1,1e-1,1e-2,1e-3,1e-4},
			xmin=-10, 
			xmax=30,
			xlabel={SNR [dB]},
			ymin= 1e-4,
			ymax= 1,
			ymode=log,
			label style={font=\small},
			tick label style={font=\small},
			ylabel= {Normalized MSE}, 
			ylabel shift = 0.0cm,
			grid = both,
			legend columns = 2,
			legend entries={
				 genie,
				 LS,
				 CNN,
				 GMM $\mathcal{H}$,
				 GMM mismatch,
				 GMM $\mathcal{Y}$,
			},
			legend style={at={(0.0,0.0)}, anchor=south west, font=\scriptsize},
			]
			
			\addplot[genie]
			table[x=SNR, y=genie, col sep=comma]
			{csvdat/2022-09-28_18-49-18_n_paths=1_ant=128_comp=128_3gpp=True_full.csv};
			
			\addplot[LS]
			table[x=SNR, y=LS, col sep=comma]
			{csvdat/2022-11-22_16-31-27_n_paths=1_ant=128_comp=128_3gpp=True_full.csv};
			
			\addplot[cnn1]
			table[x=SNR, y=cnn_fft2x_relu_non_hier_False, col sep=comma]
			{csvdat/2022-11-23_11-03-12_1paths_128antennas_20lbatch_20lsize_500ebatch_snr_train=perSNR_noisy.csv};
			
			\addplot[gmm]
			table[x=SNR, y=gmm, col sep=comma]
			{csvdat/2022-12-13_16-30-56_n_paths=1_ant=128_comp=64_3gpp=True_full.csv};
			
			\addplot[gmm_perSNR]
			table[x=SNR, y=gmm_perSNR_mismatch, col sep=comma]
			{csvdat/2022-09-28_18-51-54_n_paths=1_ant=128_comp=64_3gpp=True_full_noisyGMM.csv};

			\addplot[gmm_20dB]
			table[x=SNR, y=gmm_noisy, col sep=comma]
			{csvdat/2022-09-28_18-51-54_n_paths=1_ant=128_comp=64_3gpp=True_full_noisyGMM.csv};
			
			
		\end{axis}
	\end{tikzpicture}
	\caption{Spatial channel model from \Cref{subsec:spatial_channel_model} with $N=128$ \ac{bs} antennas and one propagation cluster. The \ac{snr} is the same for training and evaluation.}
	\label{fig:mse_simo_1path}
\end{figure}
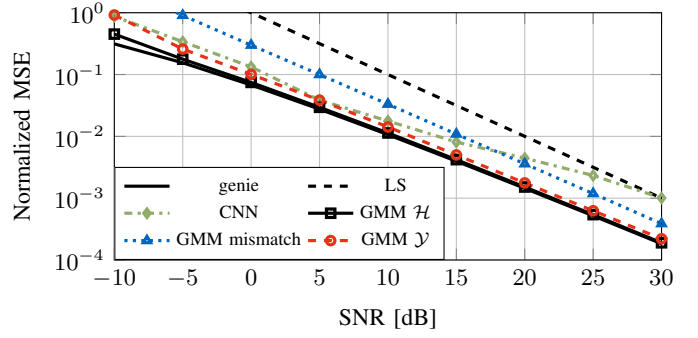

	\subsection{Doubly-Selective Fading Channel Model}

	We consider a typical \ac{ofdm} frame structure of $N_c = 12$ carriers having a spacing of $15$kHz, yielding a bandwidth of $180$kHz, and $N_t=14$ time slots, cf. \Cref{subsec:doubly_channel_model}. \blue{In both the training and the test dataset we consider \acp{mt} which move with a fixed but random velocity between three and $130$km/h, i.e., each \ac{mt} in the datasets has a velocity drawn from the uniform distribution $v\sim\mathcal{U}(3,130)$km/h. This should represent a practical \ac{bs} environment where different \acp{mt} are moving with a different velocity.}
	We evaluate once again the \ac{gmm} estimator with perfect \ac{csi} from $\mathcal{H}$ and a naive version where the \ac{gmm} is fitted with linearly interpolated data from $\mathcal{Y}$, labeled ``GMM lin-int''. The proposed approach with adapted training from \Cref{subsec:estimator} and linear interpolation as initialization is denoted by ``GMM $\mathcal{Y}$''. In the \ac{ofdm} case, we can impose structural features to the covariances as regularization on the adapted approach. In this letter, we focus on block-Toeplitz matrices as discussed in \cite{gmm_structured}, labeled as ``GMM $\mathcal{Y}$ toep''.
	We compare our proposed methods with the typically used linear interpolator, labeled ``lin-int'', and with a linear \ac{mmse} estimator based on a cell-wide global sample covariance computed with linearly interpolated data from $\mathcal{Y}$, labeled as ``samp-cov lin-int''. In this case the genie-\ac{mmse} estimator is not computable since no covariance statistics are provided from the simulator \cite{QuaDRiGa1}.
	We also compare with the \ac{ml}-based \textit{ChannelNet} from \cite{8640815} \blue{(we adopt the same architecture and hyperparameters) where the training data are comprised of linearly interpolated samples} from $\mathcal{Y}$ for a fair comparison.

\begin{figure}[t]
	\centering
	\begin{tikzpicture}
		\begin{axis}
			[width=\plotwidth,
			height=\plotheight,
			xtick=data,
			ytick={1,1e-1,1e-2,1e-3,1e-4},
			xmin=-10, 
			xmax=30,
			xlabel={SNR [dB]},
			ymin= 1e-4,
			ymax= 1,
			ymode=log,
			ylabel= {Normalized MSE}, 
			label style={font=\small},
			tick label style={font=\small},
			ylabel shift = 0.0cm,
			grid = both,
			legend columns = 2,
			legend entries={
				 lin-int,
				 GMM $\mathcal{H}$,
				 samp-cov lin-int,
				 GMM lin-int,
				 GMM $\mathcal{Y}$,
				 GMM $\mathcal{Y}$ toep,
				 ChannelNet,
			},
			legend style={at={(0.0,0.0)}, anchor=south west, font=\scriptsize},
			]
			
			
			
			\addplot[LS]
			table[x=SNR, y=lin_int, col sep=comma]
			{csvdat/2022-12-10_11-23-01_carrier=12_symbols=14_pilots=18_pattern=diamond_sum=0.99_ntrain=100000_v=130kmh.csv};
			
			\addplot[gmm]
			table[x=SNR, y=gmm_full, col sep=comma]
			{csvdat/2022-12-10_13-16-57_carrier=12_symbols=14_pilots=18_pattern=diamond_sum=0.99_ntrain=100000_v=130kmh.csv};

			\addplot[samplecov]
			table[x=SNR, y=lin_int_sample_cov_perSNR, col sep=comma]
			{csvdat/2022-12-07_16-39-47_carr=12_symb=14_pilots=18_sum=0.99_ntrain=100000_covtype=full_trainpatt=diamond_pattern=diamond_v=130kmh_comps=64_noisy_perSNR.csv};
			
			\addplot[gmm_perSNR]
			table[x=SNR, y=lin_int_gmm_mismatch_perSNR, col sep=comma]
			{csvdat/2022-12-07_16-39-47_carr=12_symb=14_pilots=18_sum=0.99_ntrain=100000_covtype=full_trainpatt=diamond_pattern=diamond_v=130kmh_comps=64_noisy_perSNR.csv};
			
			\addplot[gmm_20dB]
			table[x=SNR, y=gmm_full_perSNR, col sep=comma]
			{csvdat/2022-12-07_16-39-47_carr=12_symb=14_pilots=18_sum=0.99_ntrain=100000_covtype=full_trainpatt=diamond_pattern=diamond_v=130kmh_comps=64_noisy_perSNR.csv};

			\addplot[gmm_toep]
			table[x=SNR, y=gmm_block-toeplitz_perSNR, col sep=comma]
			{csvdat/2022-12-07_16-39-47_carr=12_symb=14_pilots=18_sum=0.99_ntrain=100000_covtype=full_trainpatt=diamond_pattern=diamond_v=130kmh_comps=64_noisy_perSNR.csv};
			
			\addplot[cnn1]
			table[x=SNR, y=channelnet, col sep=comma]
			{csvdat/2022-12-21_15-54-42_carr=12_symb=14_pilots=18_ntrain=100000_trainpatt=diamond_pattern=diamond_v=130kmh_perSNR.csv};

		\end{axis}
	\end{tikzpicture}
	\caption{\ac{ofdm} model from \Cref{subsec:doubly_channel_model} with $N_c=12$, $N_t=14$, and $N_p=18$. 
		The \ac{snr} is the same for training and evaluation.}
	\label{fig:mse_quadriga_perSNR}
\end{figure}
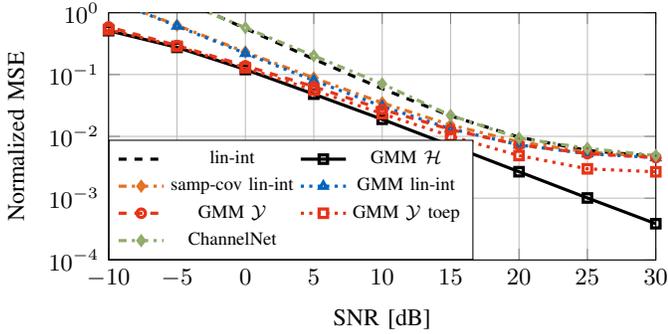

	\begin{figure}[t]
	\centering
	\begin{tikzpicture}
		\begin{axis}
			[width=\plotwidth,
			height=\plotheight,
			xtick=data,
			xmin=3, 
			xmax=34,
			xlabel={Pilots $N_p$},
			ymin= 4*1e-3,
			ymax= 2*1e-1,
			ymode=log,
			ylabel= {Normalized MSE}, 
			label style={font=\small},
			tick label style={font=\small},
			ylabel shift = 0.0cm,
			grid = both,
			legend columns = 2,
			legend entries={
				lin-int,
				GMM $\mathcal{H}$,
				samp-cov lin-int,
				GMM lin-int,
				GMM $\mathcal{Y}$,
				GMM $\mathcal{Y}$ toep,
				ChannelNet,
			},
			legend style={at={(1.0,1.0)}, anchor=north east, font=\scriptsize},
			]
			
			%
			%
			
			\addplot[LS]
			table[x=pilots, y=lin_int, col sep=comma]
			{csvdat/2022-12-10_13-16-57_carrier=12_symbols=14_pattern=diamond_sum=0.99_ntrain=100000_v=130kmh_SNR=15dB_pilots.csv};

			\addplot[gmm]
			table[x=pilots, y=gmm_full, col sep=comma]
			{csvdat/2022-12-10_13-16-57_carrier=12_symbols=14_pattern=diamond_sum=0.99_ntrain=100000_v=130kmh_SNR=15dB_pilots.csv};

			\addplot[samplecov]
			table[x=pilots, y=lin_int_sample_cov_15dB, col sep=comma]
			{csvdat/2022-12-10_13-16-57_carrier=12_symbols=14_pattern=diamond_sum=0.99_ntrain=100000_v=130kmh_SNR=15dB_pilots.csv};
			\addplot[gmm_perSNR]
			table[x=pilots, y=lin_int_gmm_mismatch_full_15dB, col sep=comma]
			{csvdat/2022-12-10_13-16-57_carrier=12_symbols=14_pattern=diamond_sum=0.99_ntrain=100000_v=130kmh_SNR=15dB_pilots.csv};
			\addplot[gmm_20dB]
			table[x=pilots, y=gmm_full_15, col sep=comma]
			{csvdat/2022-12-10_13-16-57_carrier=12_symbols=14_pattern=diamond_sum=0.99_ntrain=100000_v=130kmh_SNR=15dB_pilots.csv};
			\addplot[gmm_toep]
			table[x=pilots, y=gmm_block-toeplitz_15, col sep=comma]
			{csvdat/2022-12-10_13-16-57_carrier=12_symbols=14_pattern=diamond_sum=0.99_ntrain=100000_v=130kmh_SNR=15dB_pilots.csv};
			\addplot[cnn1]
			table[x=pilots, y=channelnet, col sep=comma]
			{csvdat/2022-12-10_13-16-57_carrier=12_symbols=14_pattern=diamond_sum=0.99_ntrain=100000_v=130kmh_SNR=15dB_pilots.csv};

		\end{axis}
	\end{tikzpicture}
	\caption{\ac{ofdm} model from \Cref{subsec:doubly_channel_model} with $N_c=12$, $N_t=14$, and $\text{SNR} = 15$dB. 
		The \ac{snr} is the same for training and evaluation.}
	\label{fig:pilots_quadriga_perSNR}
\end{figure}

	In Fig. \ref{fig:mse_quadriga_perSNR}, the performances of the above discussed estimators are evaluated for $N_p=18$ pilots over different \ac{snr} values. It can be seen that the ChannelNet has no performance improvement over linear interpolation, which is a consequence of the imperfect training data. The sample covariance and the naive \ac{gmm} approach based on linearly interpolated training data exhibit some performance improvement over linear interpolation in the low and medium \ac{snr} range, whereas for high \acp{snr}, an error floor because of the imperfections in the training data can be observed. The adapted \ac{gmm} version without structural constraints is performing close to the perfect \ac{csi} case in the low \ac{snr} range, but also seems to saturate for high \acp{snr}. In this \ac{snr}-region, the block-Toeplitz structured adapted \ac{gmm} performs especially well, showing performance gains over all other approaches even for high \ac{snr} values. This leads to the conclusion that the structural regularization, which is possible with the \ac{gmm}, is of great value when having noisy training data with missing entries.
	
	In Fig. \ref{fig:pilots_quadriga_perSNR}, we show a similar setup as above for a fixed \ac{snr} of $15$dB over varying numbers of pilots. Thereby, linear interpolation and the ChannelNet perform worst with a saturation at a high error floor, which was similarly observed before. It can be seen that the differences between the sample covariance, naive \ac{gmm}, and adapted (unconstrained) \ac{gmm} become more distinctive for higher numbers of pilots where the adapted version is outperforming them and has a decreasing gap to the perfect \ac{csi} based \ac{gmm}. This can be reasoned with a dominating systematic error for sparse pilot allocations. In contrast, the block-Toeplitz based adapted \ac{gmm} shows a substantial performance gain over the baselines even for low numbers of pilots, which is an effect of the structural regularization. As expected, the difference between the structurally unconstrained and constrained adapted \ac{gmm} decreases for higher numbers of pilots.

	\section{Conclusion} 
	In this letter, a \ac{gmm}-based robust channel estimator was proposed which can be applied in spatial and \ac{ofdm} systems. Thereby, the training data are purely comprised of sparsely allocated and noisy pilot observations, without perfect \ac{csi} \blue{samples}.
	Simulation results demonstrated that the performance of the proposed adapted \ac{gmm} estimator is close to the version which utilizes perfect training \ac{csi} with the same online complexity and memory overhead. 
	Additionally, state-of-the-art baselines are outperformed. Based on these findings, the superior robustness properties against imperfect training data of the generative model-aided estimator in contrast to regression-based \ac{ml} approaches were discussed.
	\blue{In future work, one may also account for imperfections due to interference which is not considered in this letter. }
	
	\bibliographystyle{IEEEtran}
	\bibliography{IEEEabrv,bibliography}

\begin{thebibliography}{10}
\providecommand{\url}[1]{#1}
\csname url@samestyle\endcsname
\providecommand{\newblock}{\relax}
\providecommand{\bibinfo}[2]{#2}
\providecommand{\BIBentrySTDinterwordspacing}{\spaceskip=0pt\relax}
\providecommand{\BIBentryALTinterwordstretchfactor}{4}
\providecommand{\BIBentryALTinterwordspacing}{\spaceskip=\fontdimen2\font plus
\BIBentryALTinterwordstretchfactor\fontdimen3\font minus
  \fontdimen4\font\relax}
\providecommand{\BIBforeignlanguage}[2]{{%
\expandafter\ifx\csname l@#1\endcsname\relax
\typeout{** WARNING: IEEEtran.bst: No hyphenation pattern has been}%
\typeout{** loaded for the language `#1'. Using the pattern for}%
\typeout{** the default language instead.}%
\else
\language=\csname l@#1\endcsname
\fi
#2}}
\providecommand{\BIBdecl}{\relax}
\BIBdecl

\bibitem{8052521}
H.~Ye, G.~Y. Li, and B.-H. Juang, ``{Power of Deep Learning for Channel
  Estimation and Signal Detection in OFDM Systems},'' \emph{IEEE Wireless
  Commun. Lett.}, vol.~7, no.~1, pp. 114--117, 2018.

\bibitem{8640815}
M.~Soltani, V.~Pourahmadi, A.~Mirzaei, and H.~Sheikhzadeh, ``{Deep
  Learning-Based Channel Estimation},'' \emph{IEEE Commun. Lett.}, vol.~23,
  no.~4, pp. 652--655, 2019.

\bibitem{NeWiUt18}
D.~Neumann, T.~Wiese, and W.~Utschick, ``{Learning the {MMSE} Channel
  Estimator},'' \emph{{IEEE} Trans. Signal Process.}, vol.~66, no.~11, pp.
  2905--2917, Jun. 2018.

\bibitem{9842343}
M.~Koller, B.~Fesl, N.~Turan, and W.~Utschick, ``{An Asymptotically MSE-Optimal
  Estimator Based on Gaussian Mixture Models},'' \emph{IEEE Trans. Signal
  Process.}, vol.~70, pp. 4109--4123, 2022.

\bibitem{vae_chEst}
M.~Baur, B.~Fesl, M.~Koller, and W.~Utschick, ``{Variational Autoencoder
  Leveraged MMSE Channel Estimation},'' in \emph{56th Asilomar Conf. Signals,
  Syst., Comput.}, 2022.

\bibitem{QuaDRiGa1}
S.~{Jaeckel}, L.~{Raschkowski}, K.~{Börner}, and L.~{Thiele}, ``{QuaDRiGa: A
  3-D Multi-Cell Channel Model With Time Evolution for Enabling Virtual Field
  Trials},'' \emph{{IEEE} Trans. Antennas Propag.}, vol.~62, no.~6, pp.
  3242--3256, 2014.

\bibitem{Choi2005}
J.-W. Choi and Y.-H. Lee, ``{Optimum Pilot Pattern for Channel Estimation in
  {OFDM} Systems},'' \emph{IEEE Trans. Wireless Commun.}, vol.~4, no.~5, pp.
  2083--2088, Sep. 2005.

\bibitem{bookBi06}
C.~M. Bishop, \emph{Pattern Recognition and Machine Learning (Information
  Science and Statistics)}.\hskip 1em plus 0.5em minus 0.4em\relax Berlin,
  Heidelberg: Springer-Verlag, 2006.

\bibitem{gmm_structured}
B.~Fesl, M.~Joham, S.~Hu, M.~Koller, N.~Turan, and W.~Utschick, ``{Channel
  Estimation based on Gaussian Mixture Models with Structured Covariances},''
  in \emph{56th Asilomar Conf. Signals, Syst., Comput.}, 2022.

\bibitem{7051818}
D.~{Neumann}, M.~{Joham}, L.~{Weiland}, and W.~{Utschick}, ``{Low-Complexity
  Computation of LMMSE Channel Estimates in Massive MIMO},'' in \emph{Proc.
  19th Int. ITG Workshop Smart Antennas}, 2015.

\bibitem{bookRubin02}
R.~J.~A. Little and D.~B. Rubin, \emph{{Statistical Analysis with Missing
  Data}}.\hskip 1em plus 0.5em minus 0.4em\relax John Wiley \& Sons,
  Incorporated, 2002.

\end{thebibliography}

\end{document}